\newtheorem{theorem}     {Theorem}%[section]
\newtheorem*{theorem*}     {Theorem}%[section]
\newtheorem*{proposition*} {Proposition}
\newtheorem*{corollary*}   {Corollary}
\theoremstyle{definition}
\newtheorem*{conjecture*}  {Conjecture}
\newtheorem*{remark*} {Remark}
\DeclareMathOperator{\Tr}{Tr}
\DeclareMathOperator{\U}{\rm U}
\DeclareMathOperator{\SU}{\rm SU}
\DeclareMathOperator{\su}{\mathfrak s\mathfrak u}
\newcommand{\paren}[1]{{\left( #1 \right)}}
\newcommand{\bigparen}[1]{\bigl({#1}\bigr)}
\newcommand{\bracket}[1]{{\left[ #1 \right]}}
\newcommand{\floor}[1]{{\lfloor #1 \rfloor}}
\renewcommand{\Re}{{\rm Re}}
\def\loweq@align#1#2{\lower.6ex\vbox{\baselineskip\z@skip\lineskip\z@
    \ialign{$\m@th#1\hfil##\hfil$\crcr#2\crcr=\crcr}}}
\def\lowsim@align#1#2{\lower.6ex\vbox{\baselineskip\z@skip\lineskip\z@
    \ialign{$\m@th#1\hfil##\hfil$\crcr#2\crcr\sim\crcr}}}
\def\geqq{\mathrel{\mathpalette\loweq@align >}}
\def\leqq{\mathrel{\mathpalette\loweq@align <}}
\def\grsim{\mathrel{\mathpalette\lowsim@align >}}
\def\gsim{\mathrel{\mathpalette\lowsim@align >}}
\def\lsim{\mathrel{\mathpalette\lowsim@align <}}
\newcommand{\grless} 
{ {\, \raise-.24em\hbox{$<$} \hspace{-0.8em} \raise.31em\hbox{$>$}\, } }
\newcommand{\lessgr} 
{ {\, \raise-.24em\hbox{$>$} \hspace{-0.8em} \raise.31em\hbox{$<$}\, } }
\newfont{\bg}{cmr10 scaled\magstep4}                    
\newcommand{\bigzerou}{\smash{\lower1.7ex\hbox{\bg 0}}}
\newcommand{\nn}{\nonumber \\ }
\newcommand{\C}{{\mathbb C}}
\newcommand{\FF}{{\cal F}}
\newcommand{\TR}[1]{{#1}_{\rm rev}}
\newcommand{\TRB}{_{\rm rev}}
\newcommand{\crl}[1]{[-\infty,\infty]}
\newcommand{\ket}[1]{|{#1}\rangle}
\newcommand{\Ref}[1]{(\ref{#1})}
\newcommand{\wb}{\overline}
\newcommand{\da}[1]{#1^\dag}
\newcommand{\pa}[1]{\sigma_{#1}}
\newcommand{\sa}{self-adjoint}
\newcommand{\alg}[1]{{\mathfrak #1}}
\newcommand{\g}{{\alg g}}
\newcommand{\old}{_{\rm in}}
\newcommand{\new}{_{\rm out}}
\newcommand{\qft}{_{\rm QFT}}
\newcommand{\newup}{^{\rm out}}
\newcommand{\qct}{QCT}
\newcommand{\tqct}{{\it t}\/-QCT}
\newcommand{\fqct}{{\it f}\/-QCT}
\newcommand{\gqct}{{\it g}\/-QCT}
\newcommand{\NOT}{{\sc NOT}}
\newcommand{\SWAP}{{\sc SWAP}}
\newcommand{\nos}{_{\scriptsize\rm asym}}
\newcommand{\comment}{}
\newcommand{\commenton}{
  \renewcommand{\comment}[1]{}
}
\newcommand{\upconv}{\to}
\newcommand{\id}{{\bf 1}}
\newcommand{\Tmax}{T_{2,\scriptsize\mbox{max}}}
\newcommand{\Lmax}{L_{2,\scriptsize\mbox{max}}}
\newcommand{\ov}[1]{$\overline{#1}$}
\newcommand{\psp}[1]{^{(#1)}}
\newcommand{\prepri}{}
\begin{document}

\title{%
Time complexity and gate complexity
}
\author{Tatsuhiko Koike}
\email{koike@phys.keio.ac.jp}
\affiliation{Department of Physics, Keio University, Yokohama,
  Japan}
\author{Yosuke Okudaira}
\affiliation{Department of Physics, Tokyo Institute of Technology, 
  Tokyo, Japan
}

\date{\today}

\pacs{03.67.-a, 03.67.Lx, 03.65.Ca, 02.30.Xx, 02.30.Yy}

\begin{abstract}
We formulate and investigate the simplest version of 
time-optimal quantum computation theory ({\tqct}), 
where 
the computation time 
is defined by 
the physical one 
and 
the Hamiltonian contains only one- and two-qubit interactions. 
This version of {\tqct} 
is also considered as optimality by sub-Riemannian geodesic length. 
The work has two aims: 
one is to develop a {\tqct} itself based on
physically natural concept of time, and 
the other is to pursue the possibility of using {\tqct} as a 
tool to estimate the complexity in conventional gate-optimal quantum
computation theory (\gqct). 
In particular, 
we investigate to what extent is true the statement: 
time complexity is polynomial in the number of qubits if and only if 
so is gate complexity. 
In the analysis, 
we relate {\tqct} and optimal control theory (OCT) 
through fidelity-optimal computation theory ({\fqct}); 
{\fqct} is equivalent to {\tqct} 
in the limit of unit optimal fidelity, 
while it is formally similar to OCT. 
We then 
develop an efficient numerical scheme for {\fqct} by 
modifying Krotov's method in OCT, 
which has monotonic convergence property. 
We implemented the scheme 
and 
obtained solutions of {\fqct} and of {\tqct} 
for the quantum Fourier transform and a unitary operator that does not
have an apparent symmetry. 
The former has a polynomial gate complexity and 
the latter is expected to have exponential one which is based on the
fact that a series of generic unitary operators has a exponential gate
complexity. 
The time complexity for the former is found to be linear
in the number of qubits, which is understood naturally by 
the existence of an upper bound. 
The time complexity for the latter is exponential in 
the number of qubits. 
Thus the both targets seem to be examples satisfyng the statement above. 
The typical characteristics of the optimal Hamiltonians are 
symmetry under time-reversal and constancy of one-qubit operation, 
which are mathematically shown to hold in fairly general situations. 
\end{abstract}
\maketitle

\section{Introduction}
Quantum computation is performed by physical processes obeying quantum mechanics. 
It became one of the most exciting field in physics and information theory 
after Shor~\cite{shor} 
discovered an algorithm to factorize integers 
which is 
exponentially faster than any known classical ones. 
In quantum computation theory (QCT), 
as in classical computation
theory, the computation time is usually defined by the number of
elementary steps or gates necessary to perform a computation, i.e., 
to realize a
desired unitary operator. 
Minimum such number 
is called the gate
complexity. 
We shall call this conventional QCT as gate-optimal QCT (\gqct). 

In this paper, we investigate time-optimal quantum computation theory 
({\tqct}) 
where the computation time is defined by {\em physical time}. 
There are 
two motivations for {\tqct}. 
The first is to develop 
(an abstract) 
{\tqct} itself which is a physical-time-based 
alternative to {\gqct}. 
Since a quantum computation is a physical process, 
it is physically more natural to measure the time by 
the {physical one}. %%\/ to carry out necessary computation processes. 
From this viewpoint, the computation time in 
{\gqct} can be seen as 
{\em information-theoretic time}
which is a more abstract or coarse-grained 
notion of time than the physical one. 
Time optimality 
is attracting growing attention in quantum optimal control 
theory (OCT) 
mainly in the context of physical applications 
such as control of an atom by an electromagnetic field 
and NMR quantum computation 
(\cite{khaneja,schulte} and references therein). 

The second motivation, which we stress in the present work 
and state in detail in Sec.~\ref{mot}, 
is that {\tqct} may be a useful tool to analyze 
{\gqct}. %%quantum computation theory. 
Finding the gate-optimal algorithm 
is a discrete and combinatorial problem, 
which makes construction of a general theory  difficult. 
On the other hand, 
time-optimal algorithms are smooth curves in a certain space
which obeys a differential equation~\cite{paper1,paper2,paper3},  
typically that for 
a sub-Riemannian geodesic~\cite{montgomery} on a manifold. 
This may allow a general theory %%, analytic or geometric, 
and approximation methods. %% developed in physics. 
Moreover, %%there are arguments that 
roughly speaking, 
upper and lower bounds for gate complexity %%$G(U)$ 
can be given in terms of 
optimal physical time~\cite{nielsen1,nielsen2}. 
Thus {\tqct} is useful in the investigation of {\gqct}. 
One may become able to calculate gate complexity 
by calculating time complexity. 
We ask to what extent holds the statement that the time complexity is 
polynomial in the number of qubits if and only if so is the gate complexity. 

We will therefore compare the time complexity and the gate complexity 
for some typical examples. 
We choose 
the 
quantum Fourier transform (QFT) 
as an example of the target unitary operator 
of which a fast algorithm in the sense of {\gqct} 
(i.e. whose gate complexity is polynomial in 
the number of qubits) is known, 
while we choose 
a target unitary with no special symmetry
because a generic series of target unitary operators 
is known not to have fast algorithms. 

To achieve it, 
we will make use of an efficient numerical method for OCT, 
so-called Krotov's method. 
We relate {\tqct} to fidelity-optimal {\qct} (\fqct), 
and develop a Krotov-like scheme for {\fqct} by making use of 
the formal similarity of {\fqct} and OCT. 
In the context of OCT, similar ideas of replacing time optimality to
fidelity optimality have been used. 

We will see that both the QFT and 
the asymmetric unitary operator 
satisfy the statement above. 
Furthermore, we will find some 
characteristic behaviour of the optimal Hamiltonian, 
namely, time-reversal symmetry and constancy of one-qubit 
Hamiltonian components. 

In analyzing {\tqct}, it is useful 
to combine numerical and mathematical approaches. 
We will show mathematically 
that the behavior which is found numerically 
is satisfied in fairly general situations. 
These arguments in turn support the soundness of our numerical calculation. 

The organization of the paper is as follows. 
In Sec.~\ref{tqct}, we introduce {\tqct} as a special case of 
quantum brachistochrone~\cite{paper1,paper2,paper3} 
and explain our above-mentioned motivation more precisely. 
In Sec.~\ref{fqct}, we discuss the relation between {\tqct} and {\fqct}. 
In Sec.~\ref{krotov}, we present a Krotov-like numerical method for 
{\fqct}. 
We will show our numerical results and extract the properties of the 
solutions of {\fqct} and of {\tqct} in Sec.~\ref{res}. 
We will give a proof of time-reversal invariance and constancy of one-qubit
components in Sec.~\Ref{math}. 
Sec.~\ref{conc} is for conlusion. 
In 
Appendix~\ref{krotov-mono} we show the monotonicity of the numerical 
scheme of Sec.~\ref{krotov}, 
and in Appendix~\ref{pf-time-reversal} we give a proof of a theorem in
Sec.~\ref{math}. 

We use the units $\hbar=1$.

\section{Time-optimal QCT}
\label{tqct}
In this section, we introduce {\tqct}, 
with some review of quantum brachistochrone, 
and present the motivation of the paper
in more detail. 

\subsection{Definition}
\label{tqct-def}
Let us define the simplest version of
{\em time-optimal %quantum computation theory 
{\qct} ({\tqct})}\/ 
as a special case of quantum brachistochrone for unitary 
operations~\cite{paper2}, namely, the case 
in which the Hamiltonian $H(t)$ 
involves only one- and two-qubit interactions 
and is subject to a normalization constraint. 
Below is a summary of the formalism~\cite{paper2} in this case. 

Quantum brachistochrone for unitary operations is a framework to find the 
optimal Hamiltonian $H(t)\in\Gamma$ 
which realizes the desired unitary 
$U_f$ up to phase in the minimum time $T\geqslant0$, 
where 
$\Gamma$ is the set of available
Hamiltonians. 
Namely, one wants to find the minimum $T$ such that 
there is 
a unitary operator $U(t)$ satisfying the Schr\"odinger equation
\begin{align}
  i\dot U(t)=H(t)U(t), 
  \label{eq-schu} 
  \end{align}
and the initial and final conditions, 
  \begin{align}
U(0)&={\id}, 
\label{eq-iniu}
\\
U(T)&=e^{-i\chi} U_f, 
\label{eq-finu}
\end{align}
where $\id$ is the identity operator and $\chi$ is some real. 

For {\tqct} of the system of $n$ qubits, 
the set $\Gamma$ of available Hamiltonians consists of {\sa} operators 
\begin{align}
  H(t)= 
  \sum_{a}h_{a}(t)\tau_{a}
  \label{eq-Gamma}
\end{align}
with a normalization constraint
\begin{align}
  |h(t)|^2:=
  \sum_{a}(h_{a}(t))^2=N\omega^2, 
  \label{eq-normalization}
\end{align}
where $N:=2^n$ and 
the basis $\{\tau_a\}$ 
consists of 
$\sigma^{a}_{j}/\sqrt N$ and $\sigma^{ab}_{jk}/\sqrt N$. 
Here, 
$\sigma^{a_1\cdots a_m}_{j_1\cdots j_m}$
($1\leqslant a_1<\cdots<a_m\leqslant n$ and $j_l=x,y,z$)
denotes 
the direct 
product of %%$m$ Pauli operators and $n-m$ identity
the Pauli operator 
$\pa {j_l}$ on the $a_l$th qubit %%($l=1,...,m$)
and identities on the others; 
for example, $\sigma^{13}_{xy}=\pa x\otimes1\otimes\pa
y\otimes1\otimes\cdots\otimes1$. 
The normalization condition \Ref{eq-normalization} 
can be interpreted physically 
as finiteness of available energy in operations, 
while it is needed mathematically 
for the optimality problem to be 
well-posed~\cite{paper1}. 
The $N$-dependence of 
\Ref{eq-normalization} 
is for 
consistency under composition of systems%
~\cite{paper3quantph}. 
The parameter $\omega$ can be interpreted as defining a unit for $T$. 
The problem is a particular case of linear homogeneous 
constraints~\cite[Sec. III]{paper2}. 

This is the natural counterpart in {\tqct} 
to the standard paradigm in {\gqct} 
where one constructs a desired unitary by a
sequence of one- and two-qubit operations. 
The parameter $\omega$ can be interpreted as defining a unit for $T$. 
The problem is a particular case of quantum
brachistochrone for linear homogeneous 
constraints~\cite[Sec. III]{paper2}. 

In the form of variational principle, {\tqct} is 
to minimize 
the action 
\begin{align}
  S(U,h,V,\wb\lambda)
  =
  \int_0^T dt \Bigl[L_T+L_S
  +\frac{\check\lambda(t)}2(|h(t)|^2-N\omega^2)\Bigr], 
  \label{eq-action-brach}
\end{align}
where 
$L_T:=
  \sqrt
  {\frac{\Tr{\dot U\da{} (1-P_U) (\dot U)}} 
    {\Tr{(HU)\da{}(1-P_U)(HU)}}}$, 
$P_U(A):=\frac1N(\Tr A\da U)U$, 
$L_S:=2\Re\Tr\da V\bigparen{i\dot U-HU}$, 
an overdot denotes time derivative. 
The first term counts the time duration, 
where $L_T$ 
is unity when the Schr\"odinger equation~\Ref{eq-schu} 
holds and 
is invariant under time reparametrization 
$t\mapsto f(t)$~\cite{fn-a}. 
The second term $L_S$ guarantees that 
the Schr\"odinger equation \Ref{eq-schu} holds at all times where
the unitary operator $V(t)$ is the Lagrange multipliers. 
The third term guarantees 
the normalization constraint \Ref{eq-normalization} 
where the real function 
$\check\lambda(t)$ is Lagrange multiplier. 
We have adopted an action equivalent to 
but slightly different in form from that in Ref.~\cite{paper2} 
for better connection with the arguments below.

We note that the phase of $U(t)$ does not matter in the present formulation of
{\tqct}. In fact, the action~\Ref{eq-action} is invariant under a 
time-dependent phase change of $U(t)$ 
which can be considered as a gauge transformation~\cite{paper2}. 
Therefore the theory is defined on $\U(N)/\U(1)$, and 
one can also think of it as a theory on $\SU(N)$ (by ``gauge fixing''). 

The Euler-Lagrange equations are~\cite{paper2}
the Schr\"odinger equation \Ref{eq-schu} for 
$U(t)$, 
the normalization constraint \Ref{eq-normalization}, 
the Schr\"odinger equation 
for $V(t)$, 
\begin{align}
  i\dot V(t)&=H(t)V(t), \label{eq-schv}
\end{align}
and the equation determining $H(t)$, 
\begin{align}
  \lambda(t) h_a(t)&=\Tr \tau_aF(t),
  \quad F(t):=U(t)\da V(t)+V(t)\da U(t), 
  \label{eq-lambda}
\end{align}
where $\lambda(t):=\check\lambda(t)-\frac1{N\omega^2}$. 
One must solve these equations 
with the initial and final conditions 
\Ref{eq-iniu} and 
\Ref{eq-finu}. 

Let us recall some general features of the system. 
First, 
$F$ satisfies the simple evolution equation 
\begin{align}
i\dot F(t)=[H(t),F(t)], 
  \label{eq-brach}
\end{align}
which follows from 
\Ref{eq-schu}, \Ref{eq-schv} and \Ref{eq-lambda}. 
Second, 
the expression for $\lambda(t)$ is 
\begin{align}
  \lambda(t)=
  \frac1{\omega}\sqrt{\frac1{N}\sum_a\paren{\Tr \tau_aF(t)}^2}, 
  \label{eq-lambda-2}
\end{align}
which follows from 
\Ref{eq-normalization} and \Ref{eq-lambda}. 
Third, this variable $\lambda(t)$ is constant in 
time~\cite{fn-b}. 

In the special case when $U_f$ is 
a one- or two-qubit operation in $\U(N)$, 
the solution of {\tqct} 
is given by a Riemannian geodesic 
$U(t)=e^{-iHt}$  
on $\U(N)/\U(1)$, where 
$H$ is constant~\cite{paper1,paper2}. 
The time $T(U_f)$ 
is proportional to the arc length 
and depends solely on the eigenvalues $e^{i\theta_j}$ of 
the relevant $\SU(4)$ part of $U_f$ as
\begin{align}
T(U_f)=\frac1{2\omega} \sqrt{\min_{\chi,m_j}
\sum_{j=1}^4(\theta_j+2\pi m_j-\chi)^2}, 
\label{eq-T-for-2bit}
\end{align}
where $\chi$ is a real and $m_j$ are integers. 
For example, the time is given by 
$T(U_{\scriptsize\mbox{CNOT}})=
T(S)=\frac{\sqrt3\pi}{4\omega }$ for the 
controlled {\NOT} gate $U_{\scriptsize\mbox{CNOT}}$ 
and {\SWAP} gate $S$ whose eigenvalues are $(1,1,1,-1)$, 
and 
$\Tmax=\frac{\sqrt5\pi}{4\omega}$ for 
the hardest two-qubit operation whose eigenvalues are $(1,i,-1,-i)$. 
These can be used for the unit of $T(U_f)$ 
in comparing $T(U_f)$ with $G(U_f)$.

\subsection{{\tqct}, {\gqct} and our motivation} 
\label{mot}

There are some rigorous relations between 
the gate complexity and time complexity. %%~\cite{nielsen1}. 
Very roughly speaking, one can give 
upper and lower bounds for gate complexity through 
the time complexity. 

Before introducing the relations, recall that 
in the simple cases, the time complexity $T(U)$ 
can also be interpreted as the arc length $L(U)$ of the sub-Riemannian 
geodesic connecting $\id$ and $U_f$, 
up to overall multiplicative constant. 
The simplest version of {\tqct} presented in Sec.~\ref{tqct-def} 
falls into this 
category, so that $\frac{T(U)}{\Tmax}$ appearing below in this paper 
can also be considered as 
$\frac{L(U)}{\Lmax}$ where $\Lmax$ is the sub-Riemannian geodesic distance 
between $\id$ and the furthest two-qubit operation. 

The precise relations are given~\cite[esp. Eq. (15)]{nielsen2} by
\begin{align}
  \frac{T(U)}{\Tmax}\leqslant G(U),
  \quad 
  G(U,\epsilon)\leqslant \frac{AT(U)^3n^6}{\epsilon^2}, 
  \label{eq-T-vs-G}
\end{align}
where $T(U)$ is the time complexity to realize
the unitary $U$, and $G(U,\epsilon)$ is the gate complexity to realize
a unitary 
within the distance $\epsilon$ from $U$ (measured in the
operator norm), 
$\Tmax$ is the constant defined in the previous section, 
and $A$ is some constant. 
Note that we always have 
$G(U,\epsilon)\leqslant G(U)$. 

Let us explain the first inequality of \Ref{eq-T-vs-G} 
which is simple. 
For a one- or two-qubit gate $U$, 
we have $\frac{T(U)}{\Tmax}\leqslant 1=G(U)$ by the definition of $\Tmax$. 
Then, for a general $U$, letting $U=U_m\cdots U_1$ be the gate-optimal
decomposition, we have $\frac1{\Tmax}\sum_{j=1}^m T(U_j)\leqslant G(U)$. 
However, the sum in the LHS, the time cost of this decomposition, 
must be greater than or equal to the time complexity $T(U)$ of $U$. 
Thus the first inequality of \Ref{eq-T-vs-G} holds.

The relation \Ref{eq-T-vs-G} is suggestive, and 
one might expect that 
\begin{align}
  G(U)\approx T(U), 
  \label{eq-G-T}
\end{align}
by which we mean that 
$G(U)$ and $T(U)$ are bounded by some polynomial of each other 
and $n$. 
However, it is argued that this cannot be true in general~\cite{nielsen2}. 
Then, 
one may want to ask to what extent \Ref{eq-G-T} is true in general, 
since 
\Ref{eq-T-vs-G} is derived by a completely general argument. 
In particular, we think that it is interesting to ask
what are the classes of unitary operators $U$ which 
satisfy 
\begin{enumerate}
\item 
$T(U)\approx G(U)$, i.e., 
$T(U)$ and $G(U)$ 
are bounded by a polynomial of each other and $n$, 
\item 
$T(U)\sim G(U)$, i.e., 
$T(U)$ is polynomial in $n$ if and only if so is $G(U)$. 
\end{enumerate}
The second class contains the first class. 
Note that the first and second conditions above are equiavlent to 
$G(U)\approx G(U,\epsilon)$ and 
$G(U)\sim G(U,\epsilon)$, respectively, 
by \Ref{eq-T-vs-G}. 
The questions above are not easy, but 
we want to develop a basis here 
which will help answering these questions in the future.

{\tqct} itself is a good framework to analyze these questions theoretically. 
However, a drawback is that 
it is extremely hard 
in practice 
to derive exact solutions 
when $n$ is large. 
One way is to appeal to numerical calculations, 
but it is also difficult 
because one encounters a two-point boundary value problem 
in dimensions rapidly increasing with $n$. 
In Ref.~\cite{schulte}, 
time-optimal solutions were obtained up to seven qubits
for the like of Hamiltonians %%with fixed two-qubit interactions 
appearing 
in 
NMR 
quantum computers. 
That problem involves $2n$ ($\lesssim 20$ for $n\lesssim10$) 
functions for the boundary value problem. 
Our {\tqct} has $9n(n-1)/2+3n$ functions, 
where the number easily becomes several hundred. 
To minimize the numerical difficulty, we 
introduce %%an associated problem %%as a fidelity optimality problem 
a problem 
bridging {\tqct} and optimal control theory (OCT), 
and make use of an efficient numerical scheme for the latter.

\section{Fidelity-optimal QCT}
\label{fqct}
In this section, we relate {\tqct} to {\fqct} and prepare for introducing the
numerical method in the next section. 
In the context of OCT, 
similar ideas of relating time optimality and fidelity 
optimality are used (e.g. \cite{schulte}). 

\subsection{Definition}
\label{fqct-def}
Let us define {\em fidelity-optimal {\qct} ({\fqct})}\/ as a framework
to solve the following
problem: 
given a target unitary operator $U_f$ and time interval $T$, 
find the Hamiltonian $H(t)\in \Gamma$ as a function of time which maximizes 
the fidelity $\FF(U(T),U_f)$. %% of the resulting unitary operator $U(T)$ to $U_f$. 
Here we use the trace fidelity 
\begin{align}
  \FF(U(T),U_f):=\frac1N|\Tr\da U U_f|
\end{align}
because we want $U(T)$ to be close to $U_f$ only up to phase. 
The problem is to minimize the action 
\begin{align}
  S(U,h,V,\lambda)
  &=
  -N\FF(U(T),U_f)^2
  \nn
  &+ \int_0^T dt 
  \Bigl[L_S+\frac{\lambda(t)}2(|h(t)|^2-N\omega^2)\Bigr]. 
  \label{eq-action}
\end{align}
The first term is the squared fidelity of the final unitary $U(T)$ with respect
to the target $U_f$. 
The second and third terms guarantee the Schr\"odinger equation \Ref{eq-schu}
and the normalization constraint \Ref{eq-normalization}, respectively, 
as in the case of \tqct. 
The action \Ref{eq-action} is invariant under the (time-dependent) 
change of the  phase of $U(t)$ so that 
the theory is defined on $\U(N)/\U(1)$. 

The Euler-Lagrange equations yield
\Ref{eq-schu}, \Ref{eq-schv}, 
\Ref{eq-lambda}, and 
\begin{align}
  V(T)&=\frac iN U_f{\Tr U_f^\dagger U(T)}. 
  \label{eq-finv}
\end{align}
Therefore all the equations \Ref{eq-schu}--\Ref{eq-lambda-2} hold except
that %%the second equation of 
\Ref{eq-finu} is replaced by \Ref{eq-finv}. 
As in the case of {\tqct}, 
$\lambda(t)$ is constant in time. 

The solution 
gives the maximal fidelity $\FF(U(T),U_f)$ for given $T$. 
{\fqct} is important on its own when one discusses the
tradeoff between speed and error of computation. 
here we use {\fqct} to bridge {\tqct}
and OCT.

\subsection{Relation between solutions to {\fqct} and {\tqct}}
\label{fqct-vs-tqct}
The solution of {\tqct} can be obtained from that of 
{\fqct} in the limit $\FF\to1$. 

First, {\fqct} is equivalent to 
the problem of 
minimizing the physical time to achieve given fidelity. 
Suppose the solution $U_0(t)$ of {\fqct} for given time $T_0$, with 
optimal fidelity $\FF(U(T_0),U_f)=f$, 
was not the solution of the above new problem. 
There would be $\check U(t)$ achieving the same fidelity in some $T<T_0$. 
Then one could construct $U(t)$ with 
$\FF(U(T_0),U_f)>f$ by 
defining $U(t)=\check U(t)$ for $0\leqslant t\leqslant T$ and 
appropriately in $T< t\leqslant T_0$. 
This would contradict to the fact that 
$U_0(t)$ is a solution of {\fqct}. 
Thus $U_0(t)$ is the solution of the above problem. 
The converse is shown similarly. 

Second, the above new problem with fixed fidelity $f$ 
yields {\tqct} when $f\to1$. 
Therefore the solution of {\fqct} 
gives that of {\tqct} 
in the limit %%$T\upconv T_0$ and 
$\FF\upconv1$. 

\subsection{Formal similarity of {\fqct} and OCT}
\label{fqct-vs-oct}
Let us see the formal similarity between {\fqct} and OCT. 

Our action \Ref{eq-action} defines an 
exact fidelity optimality problem with constraints. 
If we replace $\lambda(t)$ with a given constant or a given function, 
the term $\frac{\lambda(t) |h(t)|^2}2$ in the integral 
can be interpreted as a penalty term while 
the constant term $-\frac{\lambda(t) N\omega^2}2$ may be dropped. 
Then the problem is to 
minimize 
a 
combination of the fidelity and penalty terms with their weights 
specified by $\lambda(t)$, 
which is 
a typical problem in OCT. 
This is the formal relation between {\fqct} and OCT.

\section{A Krotov-like scheme}
\label{krotov}
In this section, %%we introduce a numerical scheme for {\fqct}, 
\label{krotov-def}
we shall define an efficient numerical 
scheme for {\fqct} by 
modifying 
Krotov's method~\cite{tannor}
in 
OCT, 
making use of the similarity between {\fqct} and OCT 
in Sec.~\ref{fqct-vs-oct}.

In what follows, the functions 
$\wb h_a(t)$ and $\wb\lambda(t)$, respectively,  are 
$h_a(t)$ and $\lambda(t)$ calculated in the middle of 
an iteration cycle of the scheme, 
and 
Eqs. \ov{\Ref{eq-normalization}}, 
\ov{\Ref{eq-schv}} and 
\ov{\Ref{eq-lambda}}, respectively, 
denote
Eqs. \Ref{eq-normalization}, \Ref{eq-schv} and \Ref{eq-lambda} 
with 
$h_a(t)$ (including that in $H(t)$) and $\lambda(t)$ being replaced 
by $\wb h_a(t)$ and $\wb\lambda(t)$. 
The scheme is as follows. 

(i) Prepare a seed Hamiltonian components $h_a(t)$, $0\leqslant t\leqslant T$. 

(ii) Set $U(0)=1$ and evolve $U(t)$ from $t=0$ to $t=T$ by
\Ref{eq-schu} with the Hamiltonian $H(t)=\sum_ah_a(t)\tau_a$. 

(iii) 
Set $V(T)$ by \Ref{eq-finv} and 
evolve $(V(t), \wb h_a(t), \wb\lambda(t))$ 
backward in time from $t=T$ to $t=0$ by 
Eqs. 
\ov{\Ref{eq-schv}},  
\ov{\Ref{eq-lambda}} and 
\ov{\Ref{eq-lambda-2}}, 
while $U(t)$ is treated as a given function and is not updated. 

(iv) Set $U(0)=1$ and 
evolve $(U(t), h(t), \lambda(t))$ forward in time from $t=0$ to $t=T$ 
by Eqs. \Ref{eq-schu}, \Ref{eq-lambda} and \Ref{eq-lambda-2}, 
while $V(t)$ is treated as a given function. 

(v) Repeat steps (iii) and (iv) until the variables converge; 
the final $h_a(t)$ defines the optimal Hamiltonian and 
the final $\FF(U(T),U_f)$ gives the maximal achievable fidelity 
in $T$. 

Note that the multiplier 
$\lambda(t)$ converges to an unknown constant 
only after the convergence, which is in contrast to 
original Krotov's scheme %% (with a fixed total energy) 
where $\lambda$ is a given constant parameter of the problem. 
The constancy of $\lambda(t)$ can be 
used for a convergence check of the present
scheme.  
The present scheme is also different 
from Krotov's method for the case of a fixed reference 
energy (see, e.g. \cite{kosloff2}). 
In the former, 
the variable $\lambda(t)$ is a
Lagrange multiplier 
and the normalization \Ref{eq-normalization} must be
satisfied at all times, 
and $\lambda(t)$ is updated in the scheme. 
In the latter, 
$\lambda(t)$ is a given function
which determines the amount of penalty imposed on the error in 
\Ref{eq-normalization}, 
and $\lambda(t)$ is not updated in the scheme. 
See also discussions in Sec.~\ref{fqct-vs-tqct}.

An important property of our Krotov-like scheme is 
monotonicity, which is necessary for the scheme to be useful. 
We give the proof in Appendix~\ref{krotov-mono}.

\section{Results}
\label{res}
We implemented the numerical scheme presented in Sec.~\ref{krotov} 
and performed calculations  for {\fqct}. 
We chose two examples for the target unitary operator $U_f$. 

The first is the 
quantum Fourier transform (QFT) $U\psp n\qft$, defined by
\begin{align}
  U\psp n\qft\ket{x}=\frac{1}{\sqrt N}\sum_{k=0}^{k=N-1}e^{2\pi ikx/N}\ket{k}, 
  \quad 
  x=0,1,...,N-1, 
\end{align}
which has 
the gate complexity $G(U_f)$ polynomial in $n$. 
In fact, a simple efficient algorithm (e.g. \cite{chuangnielsen})
is given by 
\begin{align}
  U\psp n\qft&=S_{n_2,n-n_2}\cdots S_{1n} U_1 \cdots U_n, 
  \nn
  U_j&:=R_{j,n-j+1,n}\cdots R_{2,n-j+1,n-j+2}W_{n-j+1}, 
  \label{eq-fourier-algorithm}
\end{align}
where 
$W_j$ is the Walsch-Hadamard gate 
$W:=
\frac1{\sqrt2}
\begin{bmatrix}
  1&1\\1&-1
\end{bmatrix}
$ applied on the $j$th qubit, 
$R_{j,km}$ is the $2\pi/2^j$-phase shift gate 
$
R_j:=
\begin{bmatrix}
  1&0\\0&e^{2\pi i/2^j}
\end{bmatrix}
$
on the $k$th qubit 
controlled by the $m$th qubit, 
$S_{jk}$ is the SWAP gate $S$ on the $j$th and $k$th qubits, 
and $n_2:=\floor{n/2}$ (integer part of $n/2$). 
The number of gates of this construction is $n(n+1)/2+\floor{n/2}$ so that 
$G(U\psp n\qft)\leqslant n(n+1)/2+\floor{n/2}\approx n^2$.

The second example of the target $U_f$ is chosen so that 
we can expect that $U_f$ has the gate complexity $G(U_f)$ exponential
in $n$. 
To do so, we pick a $U_f$ which does not have any apparent symmetry
because 
a generic unitary operator $U_f$ 
has the gate complexity $G(U_f)\approx 4^n$ and 
a unitary operator $U_f$ with a generic image of a fixed state vector, 
$U_f\ket0$, 
has the gate complexity $G(U_f)\approx 2^n$~\cite{BulOLeBre05PRL}. 
Our concrete choice is $U\psp n\nos$ which is, in the matrix form,  
\begin{align}
  U\psp n\nos{}=
  \begin{bmatrix}
    \gamma_0 \alpha_0 & \gamma_1 \alpha_0& \gamma_2 \alpha_0 
    & \cdots & \cdots & \gamma_{n-1}\alpha_{0}\\
    \gamma_0 \alpha_1 & \gamma_1 \beta_1 & \gamma_2 \alpha_1 
    & \cdots & \cdots & \gamma_{n-1}\alpha_{1}\\
    \vdots & 0   & \gamma_2 \beta_2 
    & \cdots & \cdots & \gamma_{n-1}\alpha_{2}\\
    \vdots & \vdots & 0  & \ddots & &\vdots\\
    \vdots & \vdots & \vdots & &\ddots &\vdots\\
    \gamma_0 \alpha_{n-1} & 0 &0 &\cdots &\cdots 
    & \gamma_{n-1}\beta_{n-1}\\
  \end{bmatrix}, 
  \label{eq-Unos}
\end{align}
where $\alpha_k:=(k+1)^{1/3}e^{i\sqrt k}$, 
and $\beta_j\in\C$ 
and 
$\gamma_j>0$ are determined by 
orthogonalization and 
normalization, respectively, of the columns. 
The state vector $  U\psp n\nos\ket0$ is given by the first column of 
the right hand side of \Ref{eq-Unos} 
which does not have apparent symmetry. 
One can therefore expect that the gate complexity is exponential in $n$,  
$G(U\psp n\nos)\grsim 2^n$.

For each given time $T$, the convergence of the scheme 
is checked by the convergence of the fidelity, and 
the constancy of $\lambda(t)$ explained in Sec.~\ref{krotov}.

\begin{figure}[tbp]
  \includegraphics
  [width=1.02\linewidth]
  {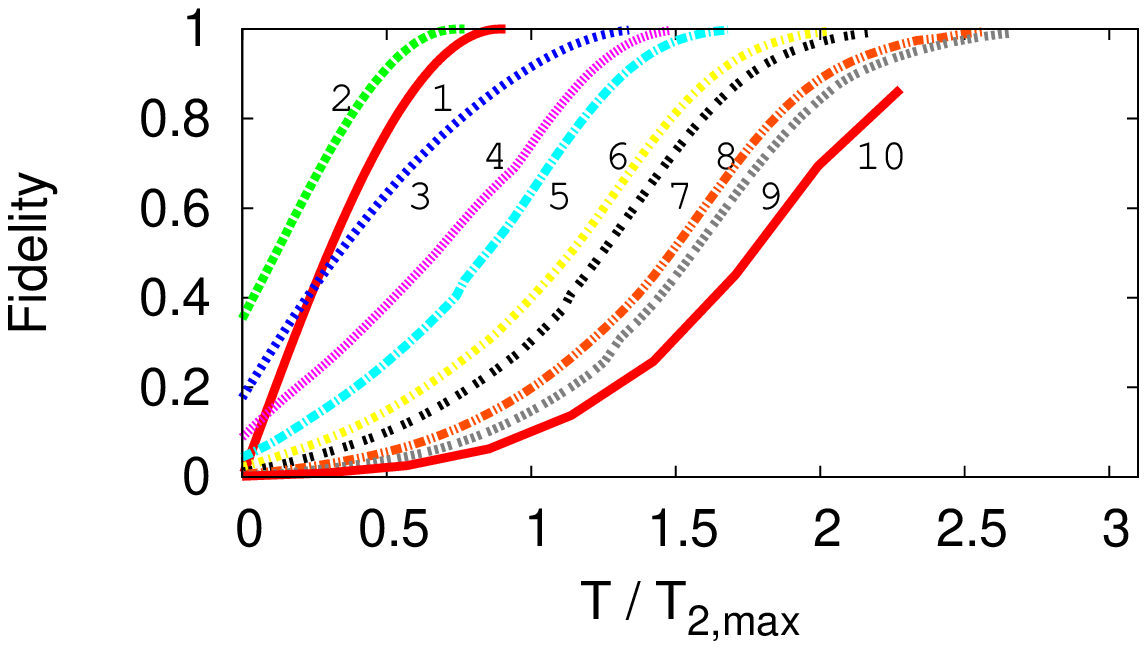}

  \vspace*{-1cm}
  \includegraphics
  [width=1\linewidth]
  {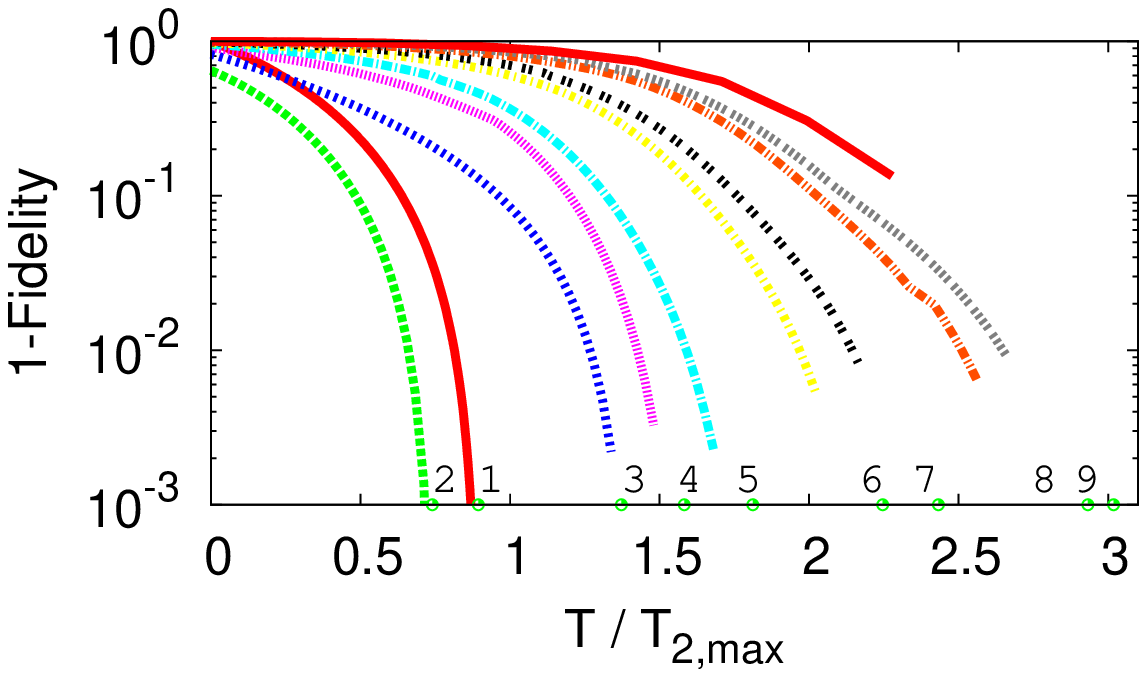}
  \caption{\prepri{(Color online) }
    (Upper) Optimal achievable fidelity in given
    physical time 
    for the $n$-qubit QFT %%with $n\leqslant 9$ 
    in {\fqct}. 
    The numbers beside the curves denote $n$. 
    The $n=10$ result is preliminary. 
    One can see property (a) in Sec.~\ref{f-t}. 
    (Lower) The same graph with the vertical axis being $1-\FF$ 
    in logarithmic scale. 
    The points and the numbers on the lower axis indicate 
    the time complexity
    $T(U_f)$ estimated by the fidelity$\to1$ limit for each $n$. 
    }
    \label{fig-f-vs-t}
\end{figure}

\subsection{Fidelity-time relation}
\label{f-t}
Let us show 
the fidelity-time relation, namely, 
the maximal achievable fidelity $\FF(U(T),U_f)$ in time 
$T$.
For simplicity, we only show the case of the QFT (Fig.~\ref{fig-f-vs-t}). 

Though our scheme in Sec.~\ref{krotov} is monotonic, 
there is a possibility of the calculation is 
trapped by a local minimum of the action $S$.  
We did the following two things 
in order to find %%the solution with 
the global maximum $\FF$. 
One is simply preparing many random seeds $h_a(t)$ for each $T$. 
Another is making use of the continuity of the solutions. 
Namely, we prepared many random seeds $h_a(t)$ for some fixed $T$. 
Then we used the solution for $T$ 
as the seed for a nearby $T$, 
and find continuous branches of locally optimal solutions. 
This ``output recycling'' turned out to be often more powerful 
than merely increasing the number of random seeds for every $T$. 
We observed crossovers of those branches. 
Only the branches of the largest $\FF$ contribute
to the curves in Fig.~\ref{fig-f-vs-t}.  

We observe the following, which 
may be characteristic of the QFT. 

(a) The odd and even qubits seem to make a pair (4-5, 6-7, and 8-9)
for $n\geqslant4$. 

In other words, 
the time-optimal solutions seems to split into the series of odd $n$ and
that of even $n$, which may be useful in the future mathematical 
analysis of the time-optimal solutions of the QFT.

\subsection{The limit $\FF\to1$}
We estimate $T(U)$ from the limit $\FF\to1$ of the solutions to {\fqct} 
in Sec.~\ref{f-t}. 

In the limit, we have two sources of error. 
One is a natural numerical error which makes the fidelity $\FF(U(T),U_f)$
saturate below unity. 
Another is that if $T$ is larger or very close to the time complexity $T(U_f)$, 
the solution of {\fqct} for the physical time $T$ begin to ``take a
roundabout route'' before reaching $U(T)$. 

With this behaviour in mind, 
we estimate the time complexity by a nonlinear
fitting of the fidelity-time curve around $\FF(U(T),U_f)\grsim0.99$. 
Fig.~\ref{fig-f=1-limit} shows an example of the estimation, in the case of the
$n=5$ QFT. 
We fit 
$y:=1-\FF(U(T),U_f)$ by 
$y=a(b-x)^{c}$, where 
$x:=T/\Tmax$. 
The estimation is given by $T(U_f)/\Tmax\simeq b$. 
The error in the time complexity $T(U)/\Tmax$ is about $0.1$ 
in the case of QFT, and 
it is about $0.1$ or $0.2$ 
in the case of 
asymmetric unitary operator. 
This does not change the conclusion of the subsequent sections.

\subsection{Time complexity as function of $n$}
\label{temporal}
Let us discuss the behaviour of the time complexity as 
a function of number of qubits, $n$. 

Fig.~\ref{fig-t-vs-n-QFT} shows 
the time complexity $T(U\psp n\qft)$ in {\tqct} 
as a function of $n$, 
where $T(U\psp n\qft)$ is obtained by the value of $T$ 
on each curve in Fig.~\ref{fig-f-vs-t} in the limit $\FF\upconv1$. 
For $n=1$ and $n=2$, 
one has the exact values 
$\frac{T(U\psp1\qft)}{\Tmax}
=\frac{\pi}{2}/ \frac{\sqrt5\pi}{4}
=\frac2{\sqrt5} \approx0.8944$ 
and 
$\frac{T(U\psp2\qft)}{\Tmax}
=\frac{\sqrt{11}\pi}{8}/ \frac{\sqrt5\pi}{4}
={\frac{\sqrt11}{2\sqrt5}}\approx0.7416$, 
calculated from \Ref{eq-T-for-2bit}, 
with which our numerical results agree well. 

\begin{figure}[t]
  \centering
  \includegraphics
  [width=.8\linewidth]
  {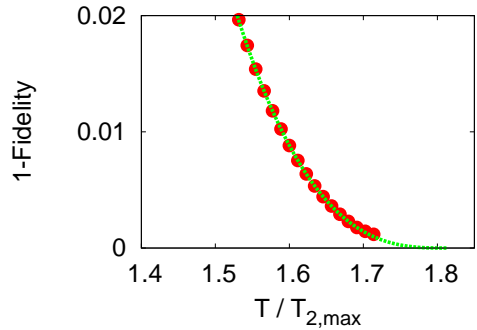}
  \caption{\prepri{(Color online) }
    Estimation of $T(U_f)$ 
    from the limit $\FF\to1$. 
    In the $n=5$ QFT example, 
    $y:=1-\FF(U(T),U_f)$ is fitted by 
    $y=0.735(1.81-x)^{2.84}$, where 
    $x:=T/\Tmax$, 
    giving an estimate 
    $T(U_f)/\Tmax\simeq 1.81$. 
    We used the data $0.002\leqslant y \leqslant 0.01$. 
    }
  \label{fig-f=1-limit}
\end{figure}

We observe the following property 
from Fig.~\ref{fig-t-vs-n-QFT}. 

(b)
The optimal time $T(U\psp n\qft)$ 
is linear in the number of qubits, $n$. 

The line in Fig.~\ref{fig-t-vs-n-QFT} is the result of a linear fitting, 
which is $T(U_f)/\Tmax=0.32n+0.27$. 
We used the data $n\geqslant2$ in the fitting because the 
behaviour of $n\leqslant2$ and that of $n\geqslant2$ should differ 
due to the nature of $\Gamma$ allowing only 
interactions involving two qubits or less.

Property (b) is in good contrast to the number of gates, 
$O(n^2)$, of the 
known efficient algorithm~\Ref{eq-fourier-algorithm} 
for the $n$-qubit QFT. 
However, it can be understood naturally. 
Since $\frac{T(W)}\Tmax=\frac{2}{\sqrt5}$, 
$\frac{T(S)}{\Tmax}=\sqrt{\frac35}$ 
and 
$\frac{T(R_{j,km})}{\Tmax}=\frac1{2^{j-1}}\sqrt{\frac35}$ from 
\Ref{eq-T-for-2bit}, 
we have the physical time cost $T'{}\psp n$ of the construction 
\Ref{eq-fourier-algorithm} 
is 
\cite{fn-c} 
\begin{align}
  &\frac{T'{}\psp n}{\Tmax}
    \nn
    \textstyle
  =&
  \frac1{\Tmax}
  \bracket{
    nT(W)
    +\sum_{j=2}^n(n-j+1)T(R_{j,km})
    +\left\lfloor \frac n2\right\rfloor T(S)
  }
  \nn
   =&
   \frac{2n}{\sqrt5}
   +
   \sqrt{\frac35}
   \sum_{j=1}^{n-1}\frac{n-j}{2^j}
   +{\sqrt\frac35}{\left\lfloor \frac n2\right\rfloor }
  \nn
   =&
   \frac{2n}{\sqrt5}
   +
   \sqrt{\frac35}
   \paren{
     n-2+\frac1{2^{n-1}}+
     \left\lfloor \frac n2\right\rfloor 
   }, 
   \label{eq-time-qft-alg}
\end{align}
which is bounded (from above and below)  by a linear function of $n$. 
The time complexity $T(U\psp n\qft)$ is several times smaller than  
$T'{}\psp n$ 
(except for $n=1$ when they coincide). 
The significance of 
$T'{}\psp n$ is that it 
is a rigorous upper bound for the time complesity, 
$T(U\psp n\qft)\leqslant T'{}\psp n$, 
which implies that 
$T(U\psp n\qft)$ {\em is at most linear in $n$}. 
This strongly supports property (b) 
and the correctness of the
numerical calculation.

Fig.~\ref{fig-t-vs-n-gene} shows the time complexity $T(U_f)$ as a function of
$n$ for the case of the asymmetric target $U\nos\psp n$. 

We observe that 

(b${}'$) The optimal time $T(U\psp n\nos)$ 
is exponential in the number of qubits, $n$. 

As in the case of QFT, we used the data $n\geqslant2$ to fit by a
function. 
The time complexity $T(U\nos\psp n)$ is well fitted by an
exponential function as  
$T(U\nos\psp n)/\Tmax=0.20\times 2^{0.82n}$. 
It is suggested from the numerical result
that 
$U\nos\psp n$ is in the class $T(U)\sim G(U)$. 

To conclude, it is suggested that 
both the QFT and 
the asymmetric unitary 
are in the class 
$T(U)\sim G(U)$, 
which is polynomial in $n$ for the former and 
exponential in $n$ for the latter. [Note that in the polynomial case, 
$T(U)\sim G(U)$ implies 
$T(U)\approx G(U)$].

\begin{figure}[t]
  \centering
  \includegraphics
  [width=\linewidth]
  {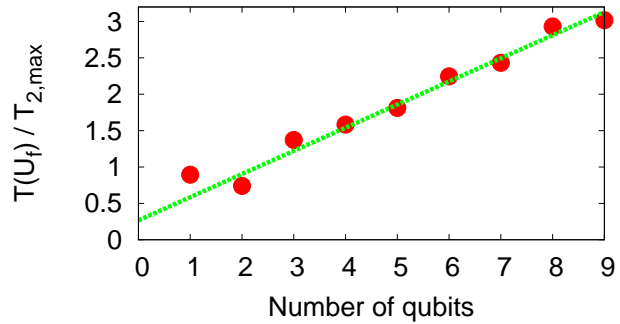}
  \caption{\prepri{(Color online) }
    Optimal time $T(U_f)$ for realization of 
    the QFT %%with $n\leqslant9$ %%in {\tqct} 
    as a function of the number of qubits, $n$, with a linear least squares
    fitting for $n\geqslant2$: 
    $T(U_f)/\Tmax=0.32n+0.27$. 
    }
    \label{fig-t-vs-n-QFT}
\end{figure}

\subsection{Behavior of the time-optimal Hamitonian}
Let us analyze the behavior of the optimal Hamiltonian $H(t)$ in {\tqct}. 

We shall say that an element of $\su(N)$ is 
{\em symmetric} (or {\em antisymmetric}) if it is so in the standard matrix 
representation. 
In particular, %%when $\{\tau_a\}$ is the Pauli basis,
$\tau_a$ is symmetric (antisymmetric) 
if it contains even (odd) number of $\pa y$; for example, 
$\sigma^{12}_{yy}/\sqrt N$ 
is symmetric and 
$\sigma^{12}_{xy}/\sqrt N$ 
is antisymmetric. 

Fig.~\ref{fig-4ham} is the behavior of the 
Hamiltonian $H(t)$ for the 4-qubit QFT with $\FF\simeq1$, 
which can be considered as the solution of {\tqct}. 
The components $h_a$ are categorized into the four according to: 
whether $\tau_a$ is one- or two-qubit interaction, and 
whether $\tau_a$ is symmetric or antisymmetric. 

The results suggest the following. 

(c)
The components for symmetric (antisymmetric) 
$\tau_a$ is symmetric (antisymmetric) 
  under time-reversal $t\mapsto T-t$, 
and 

(d)
  one-qubit interaction components are constant. 
Note that (c) and (d) imply that one-qubit antisymmetric components 
vanish.

The same properties are safisfied by the $n=5$ QFT. 
For the $n=3$ QFT, (c) does not hold but (d) does.

For the asymmetric target $U\nos\psp n$, 
we do not observe property (c), 
the time reversal invariance. 
However, we do obverve property (d), constancy of one-qubit
components, also in these cases. 
Fig.~\ref{fig-MCN-3ham} shows the $n=3$ example 
of the behavior of the optimal Hamitonian. 

These properties 
will be discussed from a theoretical point of view in Sec.~\ref{math}.

\begin{figure}[t]
  \centering
  \includegraphics
  [width=1.01\linewidth]
  {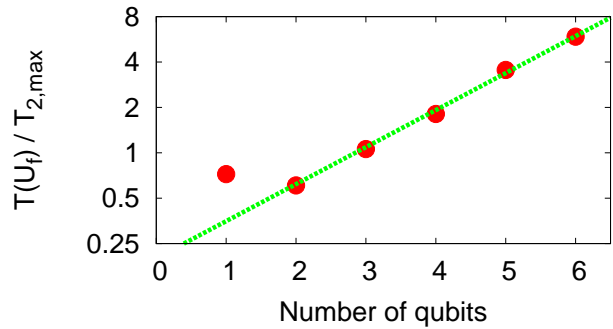}
  \caption{\prepri{(Color online) }
    Optimal time $T$ for realization of 
    the asymmetric unitary $U\nos\psp n$
    as a function of the number of qubits, $n$, with a linear least squares
    fitting for $n\geqslant2$: 
    $T(U_f)/\Tmax=0.20\times 2^{0.82n}$. 
    Note that, 
    in contrast to
    Fig~\ref{fig-t-vs-n-QFT}, 
    the vertical axis is in {\em logarithmic}\/ scale. 
    }
    \label{fig-t-vs-n-gene}
\end{figure}

\section{Mathematical justification of the behaviour of the time-optimal
  Hamiltonian} 
\label{math}

The temporal behaviour of the optimal Hamiltonian in {\tqct} found in 
Sec.~\ref{res}, 
property (c) for some of the QFT and property (d) for the QFT and 
the asymmetric unitary, 
are not peculiar to the case of those target unitary operators. 
They in fact can be proven under a fairly general condition. 
These results also serve as evidences of reliability of the numerical
calculation. 

\begin{figure}[t]
  \centering
  \includegraphics
  [width=\linewidth]
  {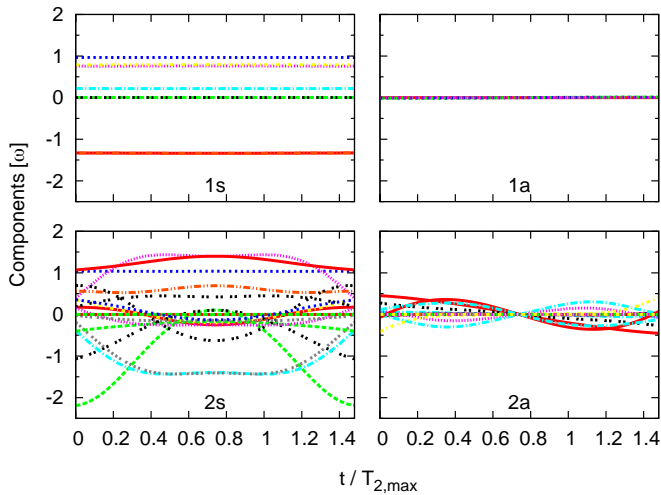}
  \caption{\prepri{(Color online) }
    Behavior of the optimal Hamiltonian for
    the 4-qubit QFT with 
    $T/\Tmax=1.48$, 
    $\FF=.9967$. 
      Shown are 
      the components with respect to the generators which are 
      (1s) one-qubit symmetric,  
      (1a) one-qubit antisymmetric, 
      (2s) two-qubit symmetric, and
      (2a) two-qubit antisymmetric, 
      in the sense of Sec.~\ref{temporal}. 
      The solution is time-reversal invariant. 
      The one-qubit components are constant in time. 
    }
\label{fig-4ham}
\end{figure}

\subsection{Time reversal invariance}
\label{math-time-reversal}

Let us show (c), the time-reversal symmetry 
found in some of the time-optimal solutions in Sec.~\ref{res}. 

Let us define the {\it time reversal}\/ 
$(\TR U(t),\TR H(t),\TR V(t),\TR \lambda(t))$ 
of the set of variables, $(U(t),H(t),V(t),\lambda(t))$, by 
\begin{align}
  \TR U(t)&:=U^*(T-t)U^T(T), 
  \nn
  \TR H(t)&:=H^*(T-t)={\textstyle \sum_a}h_a(T-t)\tau_a^*, 
  \nn
  \TR V(t)&:=V^*(T-t)U^T(T), 
  \nn
  \TR\lambda(t)&:=\lambda(T-t), 
\end{align}
where the superscript asterisk denotes the complex conjugate 
and 
the superscript $T$ denotes the transpose. 
We have the following, 
whose proof is given in Appendix~\ref{pf-time-reversal}. 
\begin{theorem}
\label{th-time-reversal}
Let the target $U_f$ be symmetric up to phase. 
If $(U(t),H(t),V(t),\lambda(t))$ is a solution of {\tqct} for $U_f$ with 
  optimal time $T$, 
  so is $(\TR U(t),\TR H(t),\TR V(t),\TR \lambda(t))$. 
  In particular, if the solution is unique, 
  it is invariant under time-reversal, 
  $(U(t),H(t),V(t),\lambda(t))
  =(\TR U(t),\TR H(t),\TR V(t),\TR \lambda(t))$.  
\end{theorem}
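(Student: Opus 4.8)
The plan is to verify, equation by equation, that the time-reversed tuple $(\TR U,\TR H,\TR V,\TR\lambda)$ satisfies the full system characterizing a solution of {\tqct}, using only two elementary inputs: each generator obeys $\tau_a^*=\tau_a^T=\epsilon_a\tau_a$ with $\epsilon_a=+1$ for symmetric and $\epsilon_a=-1$ for antisymmetric $\tau_a$; and unitarity of $U(T)$ gives $U^*(T)U^T(T)=U^T(T)U^*(T)=\id$. Several items then come for free: $\TR H(t)=\sum_a\epsilon_a h_a(T-t)\tau_a$ lies in $\Gamma$ with components $\TR h_a(t)=\epsilon_a h_a(T-t)$, so $|\TR h(t)|^2=\sum_a h_a(T-t)^2=N\omega^2$ and \Ref{eq-normalization} holds; and $\TR\lambda(t)=\lambda(T-t)$ is constant because $\lambda$ is.

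Next I would check the two Schr\"odinger equations and the boundary data for $U$. Complex-conjugating \Ref{eq-schu}, substituting $t\mapsto T-t$, and multiplying on the right by the constant $U^T(T)$ gives $i\,\tfrac{d}{dt}\TR U(t)=\TR H(t)\TR U(t)$; the identical manipulation on \Ref{eq-schv} gives $i\,\tfrac{d}{dt}\TR V(t)=\TR H(t)\TR V(t)$. For the endpoints, $\TR U(0)=U^*(T)U^T(T)=\id$ by unitarity of $U(T)$ alone (no symmetry needed), while $\TR U(T)=U^*(0)U^T(T)=U^T(T)=e^{-i\chi}U_f^T$. This is the one place the hypothesis is used: ``$U_f$ symmetric up to phase'' means $U_f^T\propto U_f$, so $\TR U(T)=e^{-i\chi'}U_f$ for a suitable real $\chi'$. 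Hence $\TR H$ drives $\id$ to $U_f$ up to phase in the same time $T$; since $T$ is already minimal, $\TR H$ is time-optimal as well.

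The computational heart is \Ref{eq-lambda} and \Ref{eq-lambda-2}. Using $(U^*)^\dagger=U^T$, the cancellations $U^T(T)U^*(T)=\id$, and $(AB)^T=B^TA^T$, I would evaluate $\TR F(t)=\TR U(t)\TR V(t)^\dagger+\TR V(t)\TR U(t)^\dagger$; the constant $U^T(T)$ factors cancel against their conjugates and one is left with $\TR F(t)=F(T-t)^T$ (here it is essential that $F=UV^\dagger+VU^\dagger$, so the dressing by $U(T)$ disappears rather than surviving as a conjugation of $F$). Then, by $\Tr(X^T)=\Tr X$ and cyclicity, $\Tr\tau_a\TR F(t)=\Tr\tau_a^TF(T-t)=\epsilon_a\Tr\tau_aF(T-t)=\epsilon_a\lambda\,h_a(T-t)=\TR\lambda(t)\,\TR h_a(t)$, which is \Ref{eq-lambda}; squaring and summing over $a$ reproduces \Ref{eq-lambda-2}. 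Having verified \Ref{eq-schu}, \Ref{eq-normalization}, \Ref{eq-schv}, \Ref{eq-lambda}, \Ref{eq-lambda-2} together with the boundary conditions, I conclude the time-reversed tuple is a solution of {\tqct} with optimal time $T$. If that solution is unique (as a point of $\U(N)/\U(1)$, i.e.\ modulo the phase gauge of $U$), it equals its own time reversal, and reading off the $\tau_a$-component of $\TR H=H$ gives $h_a(t)=\epsilon_a h_a(T-t)$ --- precisely property (c).

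I expect the identity $\TR F(t)=F(T-t)^T$ and the ensuing $\Tr\tau_a\TR F$ relation to be the main hazard: it is pure bookkeeping, but the interplay of transpose, complex conjugation, and the placement of the constant $U^T(T)$ must be tracked carefully, and the sign $\epsilon_a$ must emerge attached to $\tau_a^T$ and nowhere else. A secondary subtlety is making ``solution of {\tqct}'' precise enough to check --- in particular confirming that whatever transversality condition the multiplier $V$ inherits at the endpoints from the variational problem of Sec.~\ref{tqct-def} is itself time-reversal covariant (it is, by the same manipulations) --- and being explicit about the sense of uniqueness (up to the phase gauge) used in the final sentence.
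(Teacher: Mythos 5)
Your proposal is correct and follows essentially the same route as the paper's Appendix~B: verify equation by equation that the time-reversed tuple satisfies \Ref{eq-schu}, \Ref{eq-iniu}, \Ref{eq-finu} (this is where $U_f^T\propto U_f$ enters), \Ref{eq-normalization}, \Ref{eq-schv} and \Ref{eq-lambda}, using $\tau_a^*=\pm\tau_a$ and unitarity of $U(T)$. The only cosmetic difference is that you organize the check of \Ref{eq-lambda} via the identity $\TR F(t)=F(T-t)^T$ and trace manipulations, whereas the paper conjugates the equivalent equation $\lambda H=\sum_a\tau_a\Tr\tau_a F$ wholesale; since $F$ is self-adjoint these are the same computation.
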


The QFT is 
a symmetric target.
The theorem justifies the observed symmetry property (c) 
of the QFT for $n\ne3$. 
Convergence of ramdomly chosen initial Hamiltonians to 
a single time-symmetric solution suggests that 
the optimal solution is unique in those cases. 
The 3-qubit case did not show the symmetry, and 
consistently we observed many optimal solutions $H(t)$.

The asymmetric target unitary is not a symmetric operator 
and 
the numerical solutions did not show time-reversal invariance. 

We remark that 
Theorem~\ref{th-time-reversal}
holds not only in the present version of {\tqct} 
but also in any quantum brachistochrone
with $\Gamma^*=\Gamma$.

\begin{figure}[tp]
  \centering
  \includegraphics
  [width=\linewidth]
  {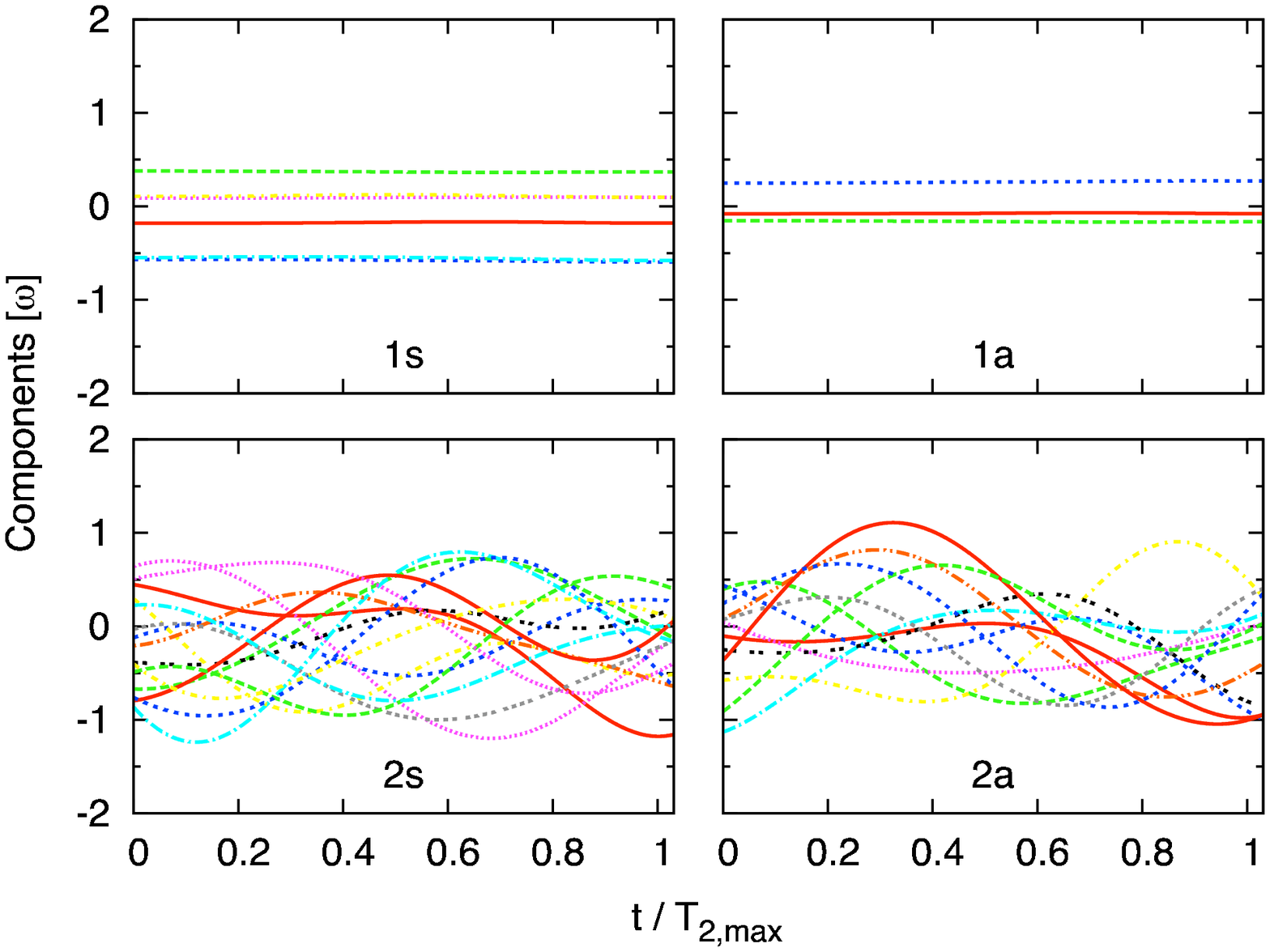}
  \caption{\prepri{(Color online) }
    Behavior of the optimal Hamiltonian for
    the 3-qubit %%multiply controlled NOT 
    asymmetric unitary $U\psp n\nos$ 
    with 
    $T/\Tmax=1.03$, 
    $\FF=.9997$. 
      Shown are 
      the components with respect to the generators which are 
      (1s) one-qubit symmetric,  
      (1a) one-qubit antisymmetric, 
      (2s) two-qubit symmetric, and
      (2a) two-qubit antisymmetric, 
      in the sense of Sec.~\ref{temporal}. 
      The solution is {\em not}\/ time-reversal invariant. 
      The one-qubit components are constant in time. 
    }
\label{fig-MCN-3ham}
\end{figure}

\subsection{Constancy of one-qubit components}
\label{math-constancy-1bit}

Let us prove (d), which turns out to hold in general. 
\begin{theorem}
The one-qubit part of the Hamiltonian for any solution of {\tqct} 
is constant in time.
\end{theorem}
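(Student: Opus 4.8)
## Proof proposal

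The plan is to exploit the evolution equation $i\dot F = [H,F]$ from \Ref{eq-brach} together with the algebraic relation \Ref{eq-lambda}, $\lambda h_a = \Tr \tau_a F$, and the fact that $\lambda$ is a nonzero constant. The key structural observation is that the one- and two-qubit operators $\{\tau_a\}$ do \emph{not} form a Lie subalgebra of $\su(N)$: a commutator of two two-qubit generators can produce a three-qubit generator. So I would decompose $F(t)$ (which a priori lives in all of $\su(N)$, or rather its complexification, plus a trace part) into its components along $\{\tau_a\}$ and its components along the ``higher'' generators $\tau_b$ with $b$ ranging over operators touching three or more qubits. Write $F = F_{\le 2} + F_{\ge 3}$ accordingly. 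Since $H = \sum_a h_a \tau_a$ has only one- and two-qubit parts, and by \Ref{eq-lambda} the $\le 2$ part of $F$ is exactly $\lambda H$ (plus possibly a multiple of the identity, which is central and drops out of commutators), the idea is to feed this back into \Ref{eq-brach}.

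The main steps, in order. First I would note $P_{\le 2} F(t) = \lambda H(t)$ up to a central term, where $P_{\le 2}$ is orthogonal projection (w.r.t.\ the trace inner product) onto $\Span\{\tau_a\}$; this is just a restatement of \Ref{eq-lambda}. Second, I would compute $i\dot H = \tfrac1\lambda \, i\, P_{\le 2}\dot F = \tfrac1\lambda P_{\le 2}[H,F] = \tfrac1\lambda P_{\le 2}\bigl[H, \lambda H + F_{\ge 3}\bigr] = \tfrac1\lambda P_{\le 2}[H, F_{\ge 3}]$, using $[H,H]=0$. Third — and this is the crux — I would argue that $P_{\le 2}[H,F_{\ge 3}]$ has \emph{no one-qubit part}: a commutator of a ($\le 2$)-qubit operator with a ($\ge 3$)-qubit operator, when expanded in the Pauli basis, cannot produce a one-qubit term. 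Concretely, $[\sigma^S,\sigma^{S'}]$ is supported (as a sum of Pauli words) on qubit-sets contained in $S\cup S'$ and, more sharply, on sets $R$ with $R\cap S\ne\emptyset$ (otherwise the factors commute) and $R \supseteq S'\setminus S$; if $|S|\le 2$ and $|S'|\ge 3$ then $|S'\setminus S|\ge 1$ while the ``overlap with $S$'' forces at least one more qubit from $S$ to survive unless cancellation... I would make this precise via the standard fact about Pauli-word multiplication: $\sigma_P \sigma_Q = \pm\, \sigma_{P\triangle Q}$ on the symmetric difference of supports, so the support of any term in $[\sigma^S,\sigma^{S'}]$ is $S\triangle S'$, which has size $\ge |S'|-|S| \ge 3-2 = 1$; I need the sharper statement that it is $\ge 2$, which follows because $S\triangle S'$ contains all of $S'\setminus S$ (size $\ge 1$) \emph{and}, since $P,Q$ must anticommute somewhere for a nonzero commutator, at least one element of $S\cap S'$, hence at least one element of $S$, so $|S\triangle S'|\ge (|S'\setminus S|) + (|S\cap S'| - |S\cap S'|)$... this counting needs care. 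Granting it, $\dot h_a = 0$ for every one-qubit index $a$.

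The step I expect to be the genuine obstacle is exactly that support/parity argument in step three: showing that the one-qubit components of $P_{\le 2}[H,F]$ vanish identically, using only that $H$ is built from $(\le 2)$-qubit Paulis and that $F_{\ge 3}$ carries only $(\ge 3)$-qubit Paulis. The clean way is to work qubit-by-qubit: fix a qubit $j$; a term in $[H,F]$ has single-qubit support $\{j\}$ only if it arises from a product of a Pauli word $P$ from $H$ and a word $Q$ from $F$ with $P\triangle Q=\{j\}$, i.e.\ $P$ and $Q$ agree on all qubits except $j$. Since $Q$ has support of size $\ge 3$, $P$ must have support of size $\ge 2$ (namely $(\supp Q)\setminus\{j\}$ plus possibly $j$), and in fact the agreement on $\ge 2$ qubits outside $j$ means $P$ and $Q$ commute there, contributing $+$; for the commutator not to cancel one needs anticommutation on qubit $j$, but that single anticommuting slot produces an overall antisymmetry that kills the commutator term unless... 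I would organize this as: if $P,Q$ differ only at $j$ then $[\sigma_P,\sigma_Q]$'s contribution cancels in $PQ-QP$ because the sign from reordering is the same. That gives $\dot h_a=0$ for one-qubit $a$, and since $h(t)$ is continuous this means the one-qubit part of $H(t)$ is constant, as claimed. (If a subtlety forces $|S'|\ge 3$ to be replaced by something weaker, one can instead invoke that $F_{\ge 3}$ is by construction orthogonal to everything $(\le 2)$-qubit and rerun the argument with the inner product $\Tr(\tau_a[H,F])=\Tr([\tau_a,H]F)$, noting $[\tau_a,H]$ is $(\le 2)$-qubit-supported for one-qubit $\tau_a$ and $H$ two-qubit-supported only when the supports overlap, so $[\tau_a,H]$ has support of size $\le 2$ and is orthogonal to $F_{\ge 3}$; then $\Tr([\tau_a,H]F)=\Tr([\tau_a,H]\lambda H)=\lambda\Tr(\tau_a[H,H])=0$.) That last reformulation via cyclicity of the trace, $\dot h_a \propto \Tr(\tau_a[H,F]) = \Tr([\tau_a,H]F)$ with $[\tau_a,H]$ still a $(\le 2)$-qubit operator and hence orthogonal to $F_{\ge 3}$, is probably the slickest route and the one I would actually write up.
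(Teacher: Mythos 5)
Your proposal is correct and takes essentially the same route as the paper: decompose $F=\lambda H+F_{\ge 3}$ via \Ref{eq-lambda}, feed this into $i\dot F=[H,F]$ with $\lambda$ constant, and show that the one-qubit projection of $[H,F_{\ge 3}]$ vanishes. The paper justifies that last step by citing the graded commutation relations $[\g_j,\g_k]=\g_{|j-k|+1}\oplus\cdots\oplus\g_{j+k-1}$, whereas your cleanest variant, $\Tr\paren{\tau_a[H,F]}=\Tr\paren{[\tau_a,H]F}$ with $[\tau_a,H]$ supported on at most two qubits and hence orthogonal to $F_{\ge 3}$, is a valid hands-on substitute (note only that your intermediate claim that a product of Pauli words is supported on the symmetric difference of their supports holds for the symplectic labels rather than the supports themselves, so the trace version is the one to write up).
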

\begin{proof}
Let %%$\g=\alg u(N)$ and let 
$\g_j$ be the space of $j$-qubit
operations. %%, namely, $\g_j$ consists of linear combinations of the 
They satisfy the following commutation relations~\cite[Sec.~V] {paper2}: 
\begin{align}
  \label{eq-g-comm}
  &[\g_j,\g_k]=\g_{|j-k|+1}\oplus\g_{|j-k|+3}
  \oplus\cdots\oplus\g_{j+k-1}, 
\end{align}
where $j,k\geqslant1$ and $\g_j:=0$ for $j>n$. 
From \Ref{eq-lambda}, we can decompose $F$ as 
\begin{align}
F= \lambda H+F'
\end{align}
with 
  $H=\sum_{j=1,2}H_j$ %%\quad
  and
  $F'=\sum_{j\ge3}F_j$, 
where an integer subscript $j$ denotes the projection to $\g_j$. 
Then it follows from 
\Ref{eq-g-comm} and $\lambda=$ const. 
that 
the equation $i\dot F=[H,F]$ %%\Ref{eq-brach}, 
can be written as 
\begin{align}
  i\dot H_1&=0, 
  \nn
  i\lambda\dot H_2&=[H_2,F_3]_2, %%\label{eq-h2} 
  \nn
  i\dot F_3&=[H_1,F_3]+[H_2,F_4]_3, \nn
  &.... %%\vdots
  \label{eq-h1} 
\end{align}
This implies $H_1=$ const. for any target unitary
$U_f$. 
\end{proof}

Since constancy of one-qubit components is a quite general feature, 
it is a useful criterion of the convergence of the numerical scheme.

\section{Conclusion}
\label{conc}
We investigated the simplest version of 
{\tqct}, 
where 
the computation time 
is defined by 
the physical one 
and 
the Hamiltonian contains only one- and two-qubit interactions. 
This version of {\tqct} 
is also considered as optimality by sub-Riemannian geodesic length.

Motivated by the relations between time complexity and gate 
complexity~\Ref{eq-T-vs-G}, 
we aimed to pursue the possibility of using time complexity as a 
tool to estimate gate complexity, 
and asked the following question: to what extent is true the statement 
that 
{\em time complexity is polynomial in the number of qubits if and only if  
so is gate complexity.}
In particular, we want to identify the classes of unitary operators $U$ 
that satisfy $T(U)\approx G(U)$ and $T(U)\sim G(U)$, by which we meant 
$T(U)$ and $G(U)$ are bounded by polynomial of each other and $n$, 
and $T(U)$ is polynomial in $n$ if and only if so is $G(U)$, 
respectively. 

For this program, 
we introduced an efficient Krotov-like numerical scheme 
by making use of the relation between {\tqct} and {\fqct} and 
the formal similarity of the latter to OCT, 
and showed its monotonic convergence property. 

We chose the quantum Fourier transform as an example of the target with 
polynomial $G(U)$ and 
a unitary operator without symmetry that is expected to have 
exponential gate complexity. 
We obtained the fidelity-time relation, time copmlexity $T(U)$, 
The time complexity of the QFT is found to be linear
in the number of qubits. 
The time complexity of %%the multiply contorolled NOT 
the target without symmetry 
is exponential in $n$. 
These results suggest that the QFT and 
the asymmetric target 
are both 
in the class $T(U)\approx G(U)$, 
and that $T(U)\approx G(U)$ is linear in $n$ for the QFT and is exponential in
$n$ for C${}^{n-1}$-NOT, respectively. 
This supports the usefulness of time complexity as a tool to estimate gate
complexity. 
It is also suggested that 
a polynomial-gate algorithm does not exist indeed 
for the asymmetric target $U\nos\psp n$, 
because $T(U)/\Tmax$ is the absolute lower bound of $G(U)$.

We also found two characteristics of the optimal Hamiltonian $H(t)$ 
of {\tqct}. 
One is symmetry under time reversal and the other is 
constancy of one-qubit operation, 
which are mathematically shown to hold 
in fairly general situations.

A natural extension of this work is to push forward with 
the program above 
by comparing the time complexity (or equivalently, the arc-length of the
sub-Riemannian geodesics) 
and the gate complexity for other unitary operators. 
Another direction is to consider other variants of {\tqct}. 
An example is {\tqct}
which allows only nearest neighbor interactions in a lattice, 
and 
another is {\tqct}
where only the time spent in two-qubit interactions is counted and 
that spent in one-qubit interactions is neglected. 

We would like to stress that
interplay of numerical and mathematical methods, of which 
an example was the argument given in Sec.~\ref{math}, 
is important in analyzing {\tqct}. 
We hope that our method will lead to 
new understanding about the power of quantum computation.

\section*{Acknowledgments}
We sincerely thank Professor Akio Hosoya 
and Professor Alberto Carlini for
fruitful discussions.

\appendix
\section{Monotonicity of our Krotov-like scheme}
\label{krotov-mono}

Let us show 
the monotonicity of the Krotov-like scheme given in Sec.~\ref{krotov}. 

Let $\Delta S$ be the change of $S$ in the cycle (iii)--(iv). 
We would like to show $\Delta S\leqslant 0$. 
Let the variables 
$(U(t),h(t),V(t),\lambda(t))$ be
\begin{align*}
(U\old(t),h\old(t),V\old(t),\lambda\old(t))  
\end{align*}
after step (iv) which will be the 
inputs to a new cycle (iii)--(iv). 
They satisfy \Ref{eq-schu}, \Ref{eq-iniu}, 
\Ref{eq-normalization}, 
\Ref{eq-lambda}, 
but not \Ref{eq-schv} or \Ref{eq-finv}. 
Let the variables be 
\begin{align*}
(U\old(t),\wb h(t),V\new(t),\wb\lambda(t)) 
\end{align*}
with 
$V\new=V\old+\wb\delta V$ 
and $\wb h=h\old+\wb\delta h$ 
after step (iii). 
They satisfy \Ref{eq-normalization}, 
\Ref{eq-schv}, 
\Ref{eq-lambda}, 
\Ref{eq-finv}, 
but not \Ref{eq-schu} or \Ref{eq-iniu}. 
Let the variables be 
\begin{align*}
(U\new(t),h\new(t),V\new(t),\lambda\new(t)) 
\end{align*}
with 
$U\new=U\old+\delta U$ and 
$h\new=\wb h+\delta h$
after step (iv). 
They satisfy \Ref{eq-schu}, \Ref{eq-iniu}, 
\Ref{eq-normalization}, 
\Ref{eq-lambda}, 
but not \Ref{eq-schv} or \Ref{eq-finv}. 

The change $\Delta S$ in the action \Ref{eq-action} 
after one cycle (iii)--(iv) is given by 
\begin{align}
  N\Delta S
  =&-N^2\FF(U\new(T),U_f)^2+N^2\FF(U\old(T),U_f)^2
  \nn
  =&-2\Re \Tr \da U\old(T) U_f \Tr \da U_f \delta U(T) 
  -|\Tr \da U_f \delta U(T) |^2, 
  \label{eq-monotonicity-1}
\end{align}
because both $(U\old,h\old)$ and $(U\new,h\new)$ 
satisfy \Ref{eq-schu} and \Ref{eq-normalization} 
and 
make the integrand in \Ref{eq-action} vanish. 
The first term on the RHS of \Ref{eq-monotonicity-1} is 
\begin{align}
  &-{2}\Re\Tr i\da V\new(T)\delta U(T)
  =-\int_0^Tdt\; {2}\Re\Tr i(\da V\new \delta U)^\bullet
  \nn
  &=
  -\int_0^Tdt (
  \lambda\new h\new\cdot\delta h
  +\wb\lambda\, \wb h\cdot\wb\delta h)
  \nn
  &=
  -\frac12\int_0^Tdt (
  \lambda\new |\delta h|^2
  +\wb\lambda |\wb\delta h|^2)\leqslant0, 
  \label{eq-monotonicity-2} 
\end{align}
where the dot denotes the inner product
and we have used 
\begin{align}
V\new(T)&=\frac iN U_f{\Tr U_f^\dagger U\old(T)},  \nn
\delta U(0)&=0, \nn
i\delta\dot U&=\delta H U\new+\wb\delta H U\old+\wb H\delta U, \nn
i\dot V\new&=\wb HV\new, \nn
\lambda\new h\newup_a&=2\Re \Tr \tau_aU\new\da V\new,
\end{align}
which follow from the conditions 
satisfied by each of the variables, given in the previous paragraph. 
One can see 
the last equality in 
\Ref{eq-monotonicity-2} 
by noticing that $(h\new+\wb h)/2$ is orthogonal to $\delta h$ because 
$|h\new|=|\wb h|$, and so forth. 

We conclude that our scheme is monotonic 
because \Ref{eq-monotonicity-1} and \Ref{eq-monotonicity-2} imply 
$\Delta S\leqslant0$. 
Note that 
$\delta U$ etc. above have not been assumed 
to be small.

\section{Proof of Theorem~\ref{th-time-reversal}}
\label{pf-time-reversal}
It is easily verified that if 
$(U(t),H(t),V(t))$ is a solution of 
\Ref{eq-schu}--\Ref{eq-normalization} 
and 
\Ref{eq-schv}--\Ref{eq-lambda}, 
so is 
$(\TR U(t),\TR H(t),\TR V(t))$. 
Eq. \Ref{eq-schu} is seen by 
\begin{align}
&i \dot U\TRB (t)
=-i\dot U^*(T-t)U^T(T)
\nn
&=H^*(T-t)U^*(T-t)U^T(T)
=\TR H(t)U\TRB (t). 
\end{align}
Eq. \Ref{eq-iniu} follows from 
$\TR U(0)=U^*(T)U^T(T)=\id$; 
Eq. \Ref{eq-finu} follows from 
\begin{align}
\TR U(T)=U^*(0)U^T(T)%%=U^T(T)
=e^{-i\chi}U_f^T=e^{-i\chi}U_f. 
\end{align}
Eqs. \Ref{eq-Gamma} and \Ref{eq-normalization} 
follow from $\tau_a^*=\pm\tau_a$. 
Eq. \Ref{eq-lambda} is 
equivalent to 
\begin{align}
\lambda(t)H(t)
=\sum_a\tau_a\Tr\tau_a 
\paren{U(t)\da V(t)+V(t)\da U(t)}, 
\label{eq-lambda-3} 
\end{align}
The time reversal satisfies the same equation 
\Ref{eq-lambda-3} because 
\begin{align}
&\TR\lambda(t) \TR H(t)
=\lambda(T-t) H^*(T-t)
\nn
&=\bracket{\sum_a\tau_a\Tr\tau_a 
\paren{U(T-t)\da V(T-t)+V(T-t)\da U(T-t)}}^* 
\nn
&=\sum_a\tau_a\Tr\tau_a 
\paren{\TR U(t)\da V\TRB(t)+\TR V(t)\da U\TRB(t)}, 
\end{align}
where we have used $\tau_a^*=\pm\tau_a$ (the signature depends 
on $a$). 
Therefore the time reversal of the solution of {\tqct} with 
the target $U_f$ and the time $T$ is a solution of the same problem.

\end{document}